\documentclass[journal,10pt,twocolumn]{IEEEtran}

\usepackage{floatflt}

\usepackage{cite}

\usepackage{graphicx}
\usepackage{verbatim}

\usepackage{psfrag}

\usepackage{subfigure}

\usepackage{url}

\usepackage{stfloats}

\usepackage{amsmath,amsthm}
\usepackage[longend,ruled]{algorithm2e}

\newtheorem{definition}{Definition}
\newtheorem{lemma}{Lemma}

\newtheorem{property}{Property}

\newcommand{\mat}{\boldsymbol}
\newcommand{\TRUE}{true}

\hyphenation{op-tical net-works semi-conduc-tor}

\begin{document}

\title{Maximizing Energy Efficiency in Multiple Access Channels by Exploiting Packet Dropping and Transmitter Buffering}

\author{M.\ Majid~Butt,~\IEEEmembership{Member,~IEEE,}
Eduard~A.~Jorswieck,~\IEEEmembership{Senior~Member,~IEEE,}
 Bj\"{o}rn Ottersten,~\IEEEmembership{Fellow,~IEEE}

\thanks{The material in this paper has been presented in part in PIMRC, September, 2013, London, UK \cite{majid_PIMRC_13}.}

\thanks{M. Majid Butt is with Computer Science and Engineering Department, Qatar University. E-mail: majid.butt@ieee.org. Eduard A. Jorswieck is with Department of Electrical Engineering and Information Technology, Dresden
University of Technology, Germany. Email: eduard.jorswieck@tu-dresden.de. Bj\"{o}rn Ottersten is with interdisciplinary center for security, reliability and trust, University of Luxembourg. Email: bjorn.ottersten@uni.lu.}
\thanks{This work was made possible by an NPRP grant 5-782-2-322 from the Qatar National Research Fund (a member of Qatar Foundation). The statements made herein are solely the responsibility of the authors.}
}

\maketitle

\IEEEpeerreviewmaketitle
\begin{abstract}
Quality of service (QoS) for a network is characterized in terms of various parameters specifying packet delay and loss tolerance requirements for the application. The unpredictable nature of the wireless channel demands for application of certain mechanisms to meet the QoS requirements. Traditionally, medium access control (MAC) and network layers perform these tasks. However, these mechanisms do not take (fading) channel conditions into account. In this paper, we investigate the problem using cross layer techniques where information flow and joint optimization of higher and physical layer is permitted. We propose a scheduling scheme to optimize the energy consumption of a multiuser multi-access system such that QoS constraints in terms of packet loss are fulfilled while the system is able to maximize the advantages emerging from multiuser diversity.
Specifically, this work focuses on modeling and analyzing the effects of packet buffering capabilities of the transmitter on the system energy for a packet loss tolerant application. We discuss low complexity schemes which show comparable performance to the proposed scheme. The numerical evaluation reveals useful insights about the coupling effects of different QoS parameters on the system energy consumption and validates our analytical results.
\end{abstract}
\begin{IEEEkeywords}
Multiuser diversity, green communications, radio resource allocation, opportunistic scheduling, Markov chain, packet loss--energy trade-off, stochastic optimization.
\end{IEEEkeywords}
\section{Introduction}
The QoS parameters like throughput, latency, packet loss rate etc., characterize the behavior of network traffic. Specifically, there are some strict hard requirements in terms of worst case behavior for multimedia traffic like minimum throughput and maximum tolerable packet delay, which need to be fulfilled to maintain the quality of experience (QoE) of the application. At the same time, system energy efficiency has emerged as one of the key performance indicators for the wireless network \cite{Zia:2011,Fehske:11,Han:2013}. For the system design, QoS parameters can be treated as degrees of freedom (DoF) to achieve high system level energy efficiency. If the application is loss and delay tolerant, the DoFs can be exploited to maximize the system energy efficiency.

In the literature, energy efficient scheduling has been discussed in different settings for delay limited systems \cite{fu:06,majid_WCL12,Neely,Gallager}. The authors in \cite{Huang:2004} propose a scheme which schedules the transmission of multimedia packets in such a way that all the users have a fair share of packet loss according to their QoS requirements, and maximizes the number of the served users under the QoS constraints.
The author in \cite{Neely2009} addresses the importance of packet dropping mechanisms by energy point of view. Traditionally, average packet drop rate is considered to be one of the most important parameters for system design \cite{Karmokar,Bettesh:2006}. However, QoE for the application, specifically multimedia streaming, depends on the other characteristics of packet dropping. Average packet drop parameter characterizes the behavior of the application on long term basis only. In multimedia applications, short term behavior dictates the QoE. For example, consider a scenario where the average packet drop rate $\theta_{\rm tar}$ is quite small but a large number of packets are dropped successively due to the deeply faded wireless channel (called bursty packet loss). In spite of fulfilling an average packet drop rate guarantee (on long term basis), the users will experience a jitter in the perceived QoE (for a multimedia application). Thus, QoS must also be defined in terms of maximum number of packets allowed to be dropped successively in addition to the average packet drop probability. This additional parameter characterizing the pattern of the dropped packets is termed \emph{continuity constraint parameter} $N$ \cite{majid_TWC:13}. Packet scheduling constrained by average packet drop rate and maximum successive packet drop belongs to a class of sequential resource allocation problems, known as Restless Multiarmed Bandit Processes (RMBPs) \cite{Whittle:1988}. This problem has been addressed for Asynchronous Transfer Mode (ATM) networks in \cite{Lee:1994}. The authors in \cite{Lee_pat:1997} discuss a similar problem and an optimal dropping scheme with the objective to minimize/maximize the packet drop gap is proposed. A useful analytical framework is discussed in \cite{Fanqqin:2013} to dimension the packet loss burstiness over generic wireless channels and a new metric to characterize the packet loss burstiness is proposed.

Traditionally, such problems are handled at upper layers of communications through link adaptation or automatic repeat request (ARQ) mechanisms. However, bringing this information to the physical layer design shows significant merits as the information can effectively be used for opportunistic scheduling purposes.
The work in \cite{majid_TWC:13} proposes an opportunistic scheduling scheme which exploits the DoF available through continuity constraint and average packet drop parameters and aims at minimizing the average system energy. The work characterizes the effects of the $\theta_{\rm tar}$ and $N$ on the system energy consumption. However, the proposed scheme does not allow buffering of data packets which is an integral part of the resource allocation mechanisms.

This work generalizes the framework in \cite{majid_TWC:13} for the case when buffering of packets is allowed for a finite number of time slots on the transmitter side. This additional DoF poses new challenges in terms of modeling and analysis of the problem because a buffer provides multiple opportunities to exploit multiuser diversity as compared to a single opportunity in real time traffic. In addition to the QoS parameters $\theta_{\rm tar}$ and $N$, the size of the buffer $B$ provides another trade-off for energy efficiency. It should be noted that in contrast to the conventional system design goal of dropping of data packet as a consequence of not being able to provide required rate to the users, our approach encourages dropping of packets to optimize the system energy consumption as long as the QoS parameters allow.

We investigate the energy efficiency of the system constrained by the coupled packet drop (QoS) parameters. The main contributions of this work are as follows:
 \begin{itemize}
   \item We allow buffering at transmitter side which allows multi-packet scheduling as compared to a single packet scheduling in \cite{majid_TWC:13}. The buffering effect to exploit channel diversity has been well studied in literature, but average and successive packet (bursty) loss constraints have not been investigated simultaneously over fading channels for the buffering system. The modeling of successive packet loss constraint requires not only to decide how many (average) packets need to be transmitted, but which of them are more significant with respect to QoE.
   \item We propose a novel scheduling scheme which takes into account channel distribution, packet loss characteristics and maximum delay limitations for a packet. This generalized framework is more complex due to involvement of an additional DoF, but provides better results in terms of energy efficiency as demonstrated through asymptotic case analysis and numerical evaluation.
   \item We investigate and quantify the effect of buffer size on system energy mathematically and characterize the dominating regions for each system parameter (e.g., buffer size, $\theta_{\rm tar}, N$) in terms of energy efficiency. We show that increasing buffer size indefinitely does not help to increase energy efficiency of the system for a fixed $N$ and $\theta_{\rm tar}$.
   \item The complexity of the proposed scheme is quite high for large buffer size. Therefore, we propose and analyze the low complexity solutions. The energy loss due to sub-optimality is evaluated numerically, which reveals the interesting result that the optimal and low complexity schemes show comparable energy performance.
 \end{itemize}

The rest of the paper is organized as follows. Section \ref{sect:system model} introduces the system model and fundamental assumptions. We discuss and analyze the proposed scheme in Section \ref{sect:scheme} while the optimization problem is formulated in Section \ref{sect:optimization}. The effect of buffer size on system energy is characterized mathematically in Section \ref{sect:buffer_size}. We discuss low complexity schemes in Section \ref{sect:suboptimal} and compare them with our proposed scheme. We provide numerical evidence of the tradeoff between energy, data loss and buffer size for our schemes in Section \ref{sect:results} and conclude with the main contributions of the work in Section \ref{sect:conclusions}.

\section{System Model}
\label{sect:system model}
In this paper, we follow the system model used in \cite{Ralf1,majid_TWC:13}. We consider a multiple-access system with $K$ users randomly placed within a certain area.
The system is able to provide a certain fraction of the total data rate to each user. Every
scheduled packet for a user $k$ has normalized size $R_k=\frac{C}{K}$ where $C$ denotes the spectral efficiency of the
system.

We consider an uplink scenario where time is slotted such that each user $k$ experiences a channel gain $h_k(t)$ in a time slot $t$. The channel
gain $h_k(t)$ is the product of path loss $s_k$ and small-scale
fading $f_k(t)$. The path loss is a function of
the distance between the transmitter and the receiver and remains constant within the time scale considered in this work. Small-scale fading depends on the scattering environment. It changes
from slot to slot for every user and is independent and identically
distributed (i.i.d.) across both users and slots, but remains constant during the time span of a single time slot. The multi-access channel is described by the input vector (X) and output vector (Y) relation as,
\begin{equation}
Y(t)=\sum_{k=1}^K \sqrt{h_k(t)}X(t)+ Z(t)
\end{equation}
where $Z$ represents additive i.i.d. complex Gaussian
random variable with zero mean and unit variance.
The channel
state information (CSI) is assumed to be known at both transmitter and receiver sides.

The continuity constraint requires us to allow scheduling of multiple users simultaneously in the same time slot. If only a single user is scheduled per time slot, the continuity constraint cannot be satisfied without allowing outage when multiple users have already dropped $N$ packets. The analysis of the scheme is based on asymptotic user case and therefore, scheduling of very large number of simultaneous users is desirable. We use superposition coding and successive interference cancelation (SIC) mechanism for successful transmission of data (rate) of simultaneously scheduled users\footnote{Theoretically, there is no limit on the number of users scheduled simultaneously if there is no power constraint. In practice, the users have peak or average power constraints and apply finite modulation and coding schemes which limit the number of users scheduled simultaneously.}. Treating the other users as interference, we end up with a Gaussian channel for which modulation and coding schemes are well researched, e.g., \cite{Forney:1998}.

\begin{figure*}
\centering
\includegraphics[width=4.0in]{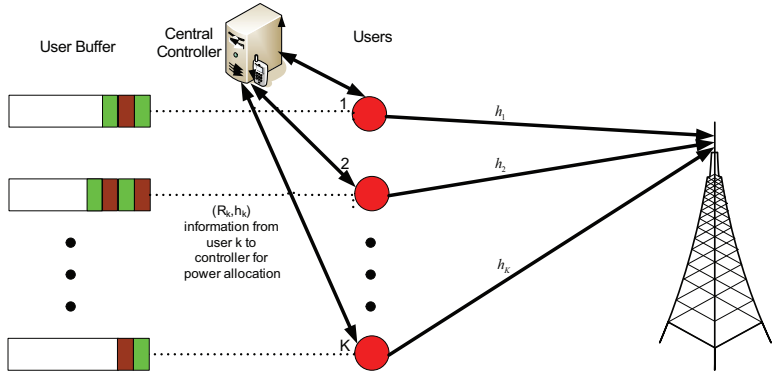}
  \caption{The system model settings for the problem. The users make scheduling decisions independently based on the channel conditions and provide rate and channel state information to a central controller which provides the required power allocation information to the users for superposition coding.}
  \label{fig:system_diagram}
  \vspace{-0.3cm}
\end{figure*}

Let ${\mathcal{K}}$ denote the set of users
to be scheduled and $\Phi_1\cdots\Phi_{|\mathcal{K}|}$ be the
permutation of the scheduled user indices that sorts the channel gains in increasing order, i.e.\
$h_{\Phi_1}\le \cdots \le h_{\Phi_k}\le \cdots \le
h_{\Phi_{|{\mathcal K}|}}$. Then, the energy of the scheduled user
$\Phi_k$ with rate $R_{\Phi_k}$, is given by,
\cite{Tse3,Ralf1}
\begin{equation}\label{eqn:power}
 E_{\Phi_k} = \frac{Z_0}{h_{\Phi_k}} \left({2^{\sum_{i\leq k}R_{\Phi_i}}-2^{\sum_{i<k}R_{\Phi_i}}}\right)
\end{equation}
where $Z_0$ denotes the noise power spectral density. Fig.~\ref{fig:system_diagram} shows the block diagram for the system settings where a central controller is responsible for providing the required power level information to each user for superposition coding.

\section{Modeling of the Proposed Scheme}
\label{sect:scheme}
A constant arrival of a single packet with normalized size $\frac{C}{K}$ is assumed for simplicity. However, the scheme is not restricted to this assumption as a random packet arrival process can be modeled as a constant arrival process where multiple arrived packets in the same time slot are merged as a single packet with random packet size following the framework in \cite{majid_TWC:13,majid_WCL12}. The packet arrival occurs at the start of a time slot and the scheduling is performed afterwards taking into account the newly queued packet. All the arriving packets are queued sequentially, i.e., the oldest arrived packet is the head of line (HOL). If a single packet has to be scheduled or dropped, it has to be the HOL packet. Note that successive packet drop constraint inspires us to buffer and drop the packets sequentially (as compared to any random queuing strategy) because it is essential to maintain a sequence of the packets in the transmitter buffer. The newly arrived packet is the pointer to indicate how essential it is to transmit or drop a packet in relation to successive packet drop constraint while the scheduling of the other buffered packets is essential to maintain a ceratin average packet drop rate.

The continuity constraint and buffer size parameters for a user $k$ are denoted by $N$ and $B$, respectively; and assumed to be identical for all the users\footnote{The framework can be generalized to non-identical $N$ case for individual users where $N$ follows a probability distribution, but this is out of scope of this work.}.
The variables $d\leq N$ and $b\leq B$ denote the number of successively dropped and buffered packets for a user $k$ at time $t$, respectively. A packet arriving at time $t$ is not dropped immediately if not scheduled but buffered up to $B$ time slots and dropped then (if still not scheduled).

We use Markov decision process (MDP) to model and analyze the scheme which is a useful tool due to dependency of the dropped packets in relation to the successive packet drop sequences. The state space of the user is defined as
\begin{equation}
\Lambda=\bigl\{(\mathcal{D},\mathcal{B});\mathcal{D}\in\{0,\dots,N\},\mathcal{B}\in\{0,\dots B\}\bigr\}
\end{equation}
where $\mathcal{D}$ and $\mathcal{B}$ denote the state space for successive packet drop and buffer states, respectively. Then, the state is defined as a composite variable $p\in\Lambda$ by the summation of the number of (already) \emph{successively} dropped and (already) buffered packets at time $t$ such that
\begin{equation}
p=d+b~.
\end{equation}
At the start of the Markov process ($p=0$), the packet is not dropped if not scheduled as packets can be buffered for $B$ time slots resulting in $d=0$ and $p=b$ for $p \leq B$. When the buffer is completely filled with packets, the unscheduled HOL packet is dropped onwards. Note that, the dropping operation is limited to a single packet as this is enough to make room for the newly arrived\footnote{The newly arrived packet waits in a separate \emph{temporary} buffer momentarily before the scheduling decision as it arrived at the start of the time slot.} packet at time $t$. Thus, the variable $d$ increases and $b$ is fixed to $B$ for $p> B$. The maximum number of states in our Markov chain is $B+N+1$ where $M=B+N$ denotes the termination state.

Let $\alpha_{pq}$ denote the transition probability from a state $p$ to $q$ with $S_t$ representing the state at time $t$. Furthermore, we denote transition probabilities associated with the scheduling, buffering and dropping decisions by the notation ${\alpha}_{pq}^s$, ${\alpha}_{pq}^b$ and ${\alpha}_{pq}^d$, respectively.
We define $\alpha_{pq}$ as
\begin{eqnarray}
\alpha_{pq} = {\rm Pr}(S_{t+1}=q|S_t=p)
=\begin{cases}
{\alpha}_{pq}^s & \forall p,q\leq \min(p,B)\\
{\alpha}_{pq}^b & p< B,q=p+1\\
{\alpha}_{pq}^d&p\geq B,q=p+1\\
0& \mbox{otherwise}
\end{cases},
\end{eqnarray}
where possible values of the states form the finite state space $\Lambda$ of
the MDP.

We define a scheduling threshold.
\begin{definition}[Scheduling Threshold $\kappa_{pq}$] It is defined as the minimum small scale fading value $f_k$ required to make a state transition from state $p$ to $q$ such that
\begin{equation}
{\alpha}_{pq}^s =
{\rm Pr}\bigl(\kappa_{pq}<f_k\leq\kappa_{p(q-1)}\bigr)\quad 0\leq q\leq \min(p,B),
\label{eqn:alpha1}
\end{equation}
where $\kappa_{p0^-}$ is defined to be infinity with $S_{0-}$ denoting a dummy state before $S_0$.
\end{definition}
Thus, the small scale fading must be greater than $\kappa_{pq}$ for the process to enter in state $q$ where fading vector is quantized in non-overlapping intervals. The threshold definition uses fading instead of channel gain to avoid near-far effect which gives unfair advantage to the users near the base station.

\subsection{The Proposed Scheduling Scheme}

\begin{figure*}[!t]
 \centering
\subfigure[N=2, B=1]
 {\includegraphics[width=3.0in]{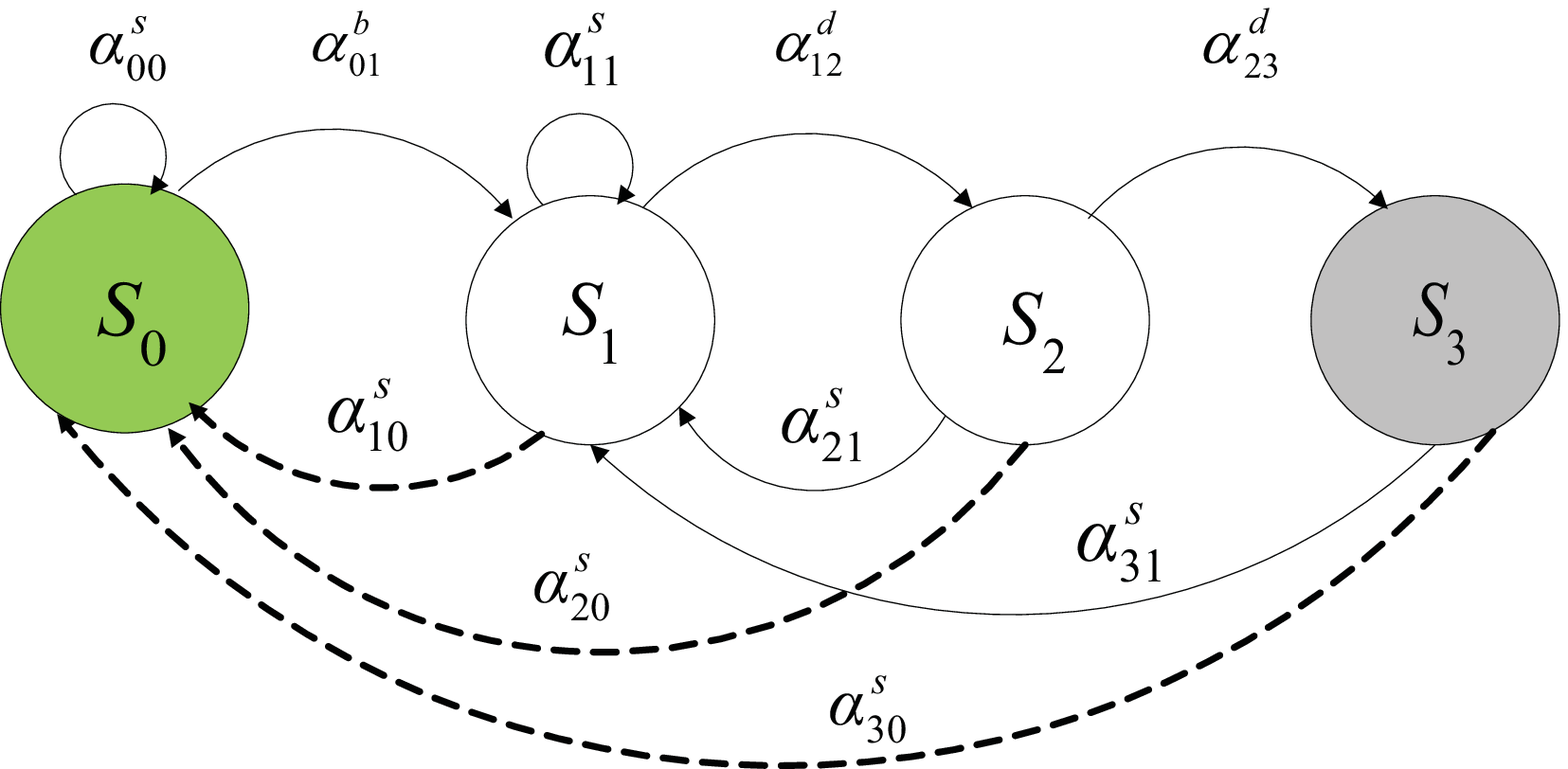}}~~
\subfigure[N=1, B=2]
 {\includegraphics[width=3.0in]{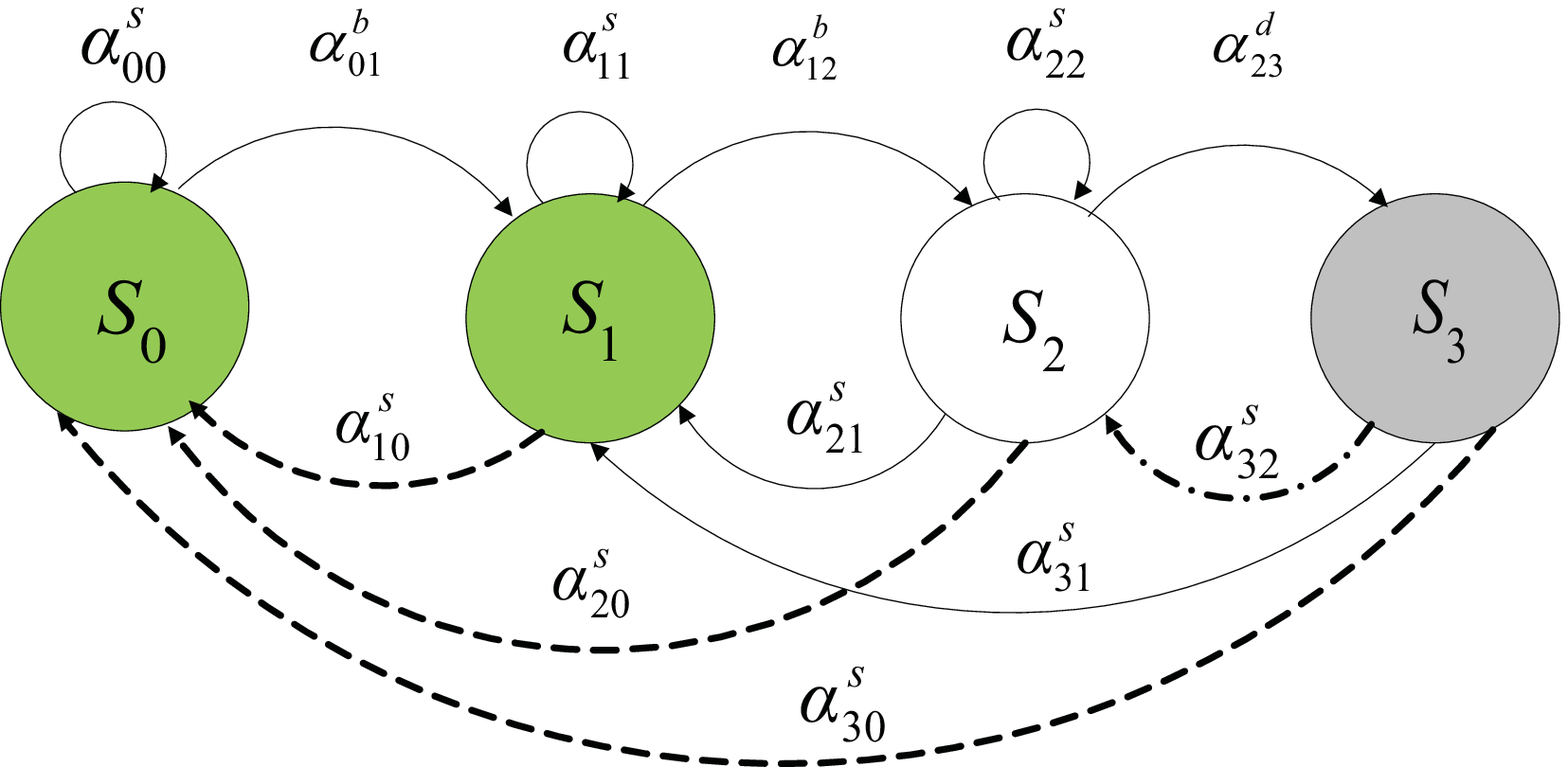}}
  \caption{State transition diagrams for different buffer size and continuity constraint parameters. The green colored states represent the buffering states while grey state is the $M^{\rm th}$ state. The superscript notation with transition probabilities depict the actions associated with the transitions.}
  \vspace{-0.3cm}
  \label{fig:state_diagram}
\end{figure*}

We assume an infinitely large number of users in the system for the analysis purpose. This assumption makes the analysis of the scheme tractable which is hard otherwise due to multiuser interactions. In the large user limit, the scheduling decisions of the users decouple and the multiuser system can be modeled as a single user system following the work in \cite{viswanath:01,majid_TWC:13}. Every user makes his own scheduling decision independent of the other users.

The purpose of the scheduling scheme is to maximize the use of available fading conditions by scheduling as many packets as possible. Thus, the fading vector is quantized in such a way that the discrete set of state-dependent scheduling thresholds determine the intervals for the optimal scheduling decisions and the size of this vector equals the number of packets available for scheduling in a state $p$.

In a state $p\geq q$, the scheduler makes a state transition to state $q$ for fading $f_k$ such that
\begin{equation}
\kappa_{pq}<f_k\leq\kappa_{p(q-1)}, \quad 0\leq {q}\leq \min(p,B)~.
\label{eqn:decision}
\end{equation}
For a state transition ${\alpha}_{pq}^s$ with $q\leq \min(p,B)$, the number of the scheduled packets is given by
\begin{equation}
L(p,f) = \min(p,B)-q+1~,
\label{eqn:packets}
\end{equation}
where $q$ is determined uniquely by (\ref{eqn:decision}).
Obviously, a user can only schedule as many packets as buffered. Thus, the maximum scheduled packets for a state $p<B$ are limited to $p-q+1$ (due to constant arrival model) while they are fixed to $B-q+1$ for $p\geq B$.

Note that scheduling of packets starts with the HOL packet and ends with the most recently arrived packet. Given a state $p$, (\ref{eqn:decision}) chooses $q$ which maximizes the average system reward. To meet the continuity constraint with probability one,
$\kappa_{MB}$ is set to zero to allow transmission of the HOL packet in state $M$, similar to the approach used in \cite{jindal_2009}.

We deduce the following properties of the proposed scheduling scheme from (\ref{eqn:decision}).
\begin{property}
The next state (in case of scheduling) is limited by the minimum of $p$ and $B$. If $p\leq B$, $q$ cannot exceed $p$, otherwise it is limited to $B$.
\end{property}
Thus, up to $\min(p,B)$ buffered packets and one newly arrived packet can be scheduled depending on small scale fading in a state $p$. As the quantity $\min(p,B)$ is used extensively in the analysis, we denote it by $\mu$ in the rest of the paper.
\begin{property}
Scheduling thresholds in a state $p$ follow the monotonic decrease property that
\begin{equation}
\kappa_{pq}<\kappa_{p(q-1)}~,\quad\forall p,0\leq q< \mu ~.
\label{eqn:P1}
\end{equation}
\end{property}
\begin{property}
For a state transition resulting in the scheduling of the same number of packets,
\begin{equation}
\kappa_{pq}\leq\kappa_{(p-1)(q-1)} ~,\quad\forall p,0\leq q< \mu ~.
\label{eqn:prop3}
\end{equation}
\end{property}
The closer to termination state, the smaller the thresholds and the larger the energy expenditure. This follows from the classical optimal stopping theory formulation where expected incentive in waiting decreases monotonically \cite{jindal_2009}.

If $f\leq \kappa_{p\mu}$, no scheduling occurs. In this case, the next state $q$ equals $p+1$ but a packet can be dropped or buffered depending on the conditions in (\ref{eqn:alpha2}) and (\ref{eqn:alpha3}) such that
\begin{equation}
{\alpha}^b_{p(p+1)}= {\rm Pr} (f\leq\kappa_{pp}) = 1 - \sum_{q=0}^p{\alpha}_{pq}^s~,\qquad p<B~.
\label{eqn:alpha2}
\end{equation}
$\kappa_{pp}$ denotes the minimum threshold for self transition to schedule at least one packet.
If $p<B$, the unscheduled HOL packet is buffered with the option that it can be scheduled in one of the $B-p$ time slots in future.
\begin{equation}
{\alpha}^d_{p(p+1)}= {\rm Pr} (f\leq\kappa_{pB}) = 1 - \sum_{q=0}^B{\alpha}_{pq}^s~, \qquad p\geq B,
\label{eqn:alpha3}
\end{equation}
where $\kappa_{pB}$ denote the minimum thresholds to make transition to state $B$ for scheduling at least one packet.
If $p>B$, the unscheduled HOL packet has to be dropped as the buffer is already full. The best option by continuity constraint point of view is to drop HOL packet to make room for the newly arrived packet.

The state transition probabilities are computed as a function of scheduling thresholds via (\ref{eqn:alpha1}), (\ref{eqn:alpha2}) and (\ref{eqn:alpha3}), and depend on the fading distribution and QoS parameters ($N,B,\theta_{\rm tar}$). The state transition probabilities constitute the state transition probability matrix $\mat{Q}$ for an MDP.

To explain the computation of matrix $\mat{Q}$, consider the state diagram in Fig. \ref{fig:state_diagram} for different values of $N$ and $B$ parameters.
The corresponding matrix $\mathbf{Q}$ for a system with $N=1$ and $B=2$ is given by
\begin{equation}
\mathbf{Q} = \left( \begin{array}{cccc}
{\alpha}^s_{00} & {\alpha}^b_{01}&0  &0 \\
{\alpha}^s_{10} & {\alpha}^s_{11} & {\alpha}^b_{12} &0 \\
{\alpha}^s_{20} & {\alpha}^s_{21}  & {\alpha}^s_{22} & {\alpha}^d_{23}\\
{\alpha}^s_{30} & {\alpha}^s_{31}  & {\alpha}^s_{32} & 0\\
\end{array} \right) \label{eqn:Drop_STM}.
\end{equation}
Appendix \ref{app:k_relation} shows the general relation between matrix $\mathbf{Q}$ and scheduling thresholds for a small scale fading distribution $p_f(y)$.

It should be noted that the number of states in an MDP are the same for the parameter sets $N=2, B=1$ and $N=1,B=2$ but the transition probability matrix $\mathbf{Q}$ differs and captures the effect of each parameter on the system energy. The parameter set $N=2, B=1$ requires optimization of 2 thresholds per state for $p\geq 1$ while the parameter set $N=1, B=2$ requires $3$ thresholds per state for $p\geq 2$. We evaluate the energy efficiency of both cases numerically in Section \ref{sect:results}.

\section{Optimization Problem Formulation}
\label{sect:optimization}
The number of scheduled packets are considered virtual users (VU) for the analysis purpose.
It is known that the average energy consumption of the system per transmitted information bit at
the large system limit $K \to \infty$ is given by \cite{majid_TWC:13,Ralf1}
\begin{equation}
\label{eqn:energy_function}
\left(\frac{E_{\rm b}}{N_0}\right)_{\rm
sys} =  \log(2) \int\limits_0^\infty \frac {2^{C \,{\rm P}_{h,\rm
VU}(x)}}x\, {\rm dP}_{h,\rm VU}(x)
\end{equation}
where ${\rm P}_{h,\rm VU}(\cdot)$ denotes the cumulative distribution function (cdf) of the fading of the scheduled VUs. It comprises of the small scale fading and the path loss components of the VUs. However, in the large system limit, the state transitions depend only on the small scale fading distribution as the path loss for VUs follows the same distribution as the path loss of the users.

The optimization problem is to minimize system energy for the target average and successive packet drop constraints. We formulate the problem with the help of MDP model described in Section \ref{sect:scheme} and optimize the state transition probabilities (or resulting $\mat{Q}$) offline.

Consider the following optimization problem:
\begin{eqnarray}
\label{eqn:objective}
&\min_{\mathbf{Q}\in \Omega} \left(\frac{E_{\rm b}}{N_0}\right)_{\rm{sys}}&\\
&\mbox{s.t.} \begin{cases}\mathcal{C}_1: 0\leq\sum_{m=0}^{\mu}\alpha_{pm}\leq 1 &   0 \leq \alpha_{pm} \leq 1, 0\leq p\leq M\\
\mathcal{C}_2: \theta_r\leq \theta_{\rm tar} &  \mathbf{Q}\in \Omega\\
\mathcal{C}_3:\sum_{q=0}^M \alpha_{pq}=1& 0\leq p \leq M\\
 \mathcal{C}_4:B+N=M & B<\infty, N<\infty
\end{cases},&
\label{eqn:optimization}
\end{eqnarray}
where $\Omega$ denotes the set of permissible matrices for $\mathbf{Q}$. Note that $\theta_{\rm tar}$ is the target (maximum) average packet drop rate while $\theta_r$ is the average packet drop rate resulting from our proposed scheme and any fixed $\mathbf{Q}$ according to (\ref{eqn:optimization}) such that
\begin{eqnarray}
\theta_r = \sum_{p=B}^{M-1}\alpha_{p(p+1)}\pi_p
= \sum_{p=B}^{M-1} \Bigl(1-\sum_{m=0}^{\mu} \alpha_{pm}\Bigr)\pi_p~.
\label{eqn:drop_cons2}
\end{eqnarray}
Equation (\ref{eqn:drop_cons2}) results by combining  $\mathcal{C}_1$ and $\mathcal{C}_3$ in (\ref{eqn:optimization}) while
$\pi_p$ denotes the steady state probability of the state $p$ and follows the property of homogenous Markov Chain that
\begin{equation}
 \sum_{p=0}^{M}\pi_p =1~.
\end{equation}
In our MDP model, the forward transition for the state $p\geq B$ represents the events of dropping the packet in state $p$ and the summation over the corresponding transition probabilities $\alpha_{p(p+1)}$ gives the average dropping probability $\theta_r$. The summation in (\ref{eqn:drop_cons2}) starts from state $B$ as the unscheduled packets are buffered for $p<B$.

The constraint on average packet drop rate is modeled by $\mathcal{C}_2$ in (\ref{eqn:optimization}).
The constraint on successive packet drop is modeled through $\mathcal{C}_4$ while $\mathcal{C}_1$ and $\mathcal{C}_3$ are the standard constraints on the transition probabilities. For a fixed $p$, the corresponding channel-dependent optimal scheduling thresholds can be computed from the optimized $\vec{\alpha_p^*}=[\alpha_{p0}^*,\dots \alpha_{p\mu}^*]$ using (\ref{eqn:alpha1}).

$\left(\frac{E_{\rm b}}{N_0}\right)_{\rm sys}$ in (\ref{eqn:energy_function}) requires computation of channel distribution for the scheduled users.
From our MDP model for the problem, the probability density function (pdf) of the small scale fading of the scheduled VUs is computed as a function of the optimized scheduling thresholds by
\begin{equation}
{\rm p}_{f,\rm VU} (y) = \sum\limits_{p=0}^M c_pL(p,y)\pi_p\,{\rm p}_{f}(y),\quad \kappa_{pq}<y\leq\kappa_{p(q-1)} \label{eqn:SVU_fading}
\end{equation}
where ${\rm p}_{f}(y)$ and $c_p$ denote the
small scale fading distribution and a normalization constant while $L(p,y)$ is given by (\ref{eqn:packets}). The pdf represents the conditional probability of being in state $p$ and transmitting $L(p,y)$ packets for small sale fading $y$.

The cdf of the VUs can be written as a sum of integrals
\begin{eqnarray}
\label{eqn:SVU_cdf_integrals}
{\rm P}_{f,\rm VU}(y)&=& \sum\limits_{p=0}^M c_p\pi_p
\Bigl(L(p,y)\int \limits_{\kappa_{pq}}^y {\rm p}_{f}(\xi){\rm d\xi} \\
&+& \sum_{b=1}^{\mu-q}b\int
\limits_{\kappa_{p(\mu-b+1)}}^{\kappa_{p(\mu-b)}}
{\rm p}_{f}(\xi){\rm d\xi}\Bigr)\nonumber\\
&=&\sum\limits_{p=0}^M c_p\pi_p\Bigl(
L(p,y){\rm P}_{f}(y)
-\sum_{b=0}^{\mu-q}{\rm P}_{f} (\kappa_{p(\mu-b)})\Bigr)~,
\label{eqn:SVU_cdf3}
\end{eqnarray}
since no virtual users are scheduled for $y<\kappa_{p\mu}$. The channel distribution for the scheduled VUs is computed in close form for Rayleigh fading and the path loss distribution in Appendix \ref{app:channel}.
\subsection{Heuristic Optimization}
The optimization problem is to compute a set of transition probabilities that result in minimum system energy in (\ref{eqn:energy_function}). For every state $p$, an optimal $\vec{\alpha^*}=[\alpha_{p0}^*,\dots \alpha_{p\mu}^*]$ needs to be computed. The computation of optimal $\vec{\alpha^*}$ under constraints in (\ref{eqn:optimization}) is a stochastic optimization problem and requires heuristic optimization techniques like genetic algorithms, neural networks, etc., which provide acceptable solutions with reasonable computational complexity.

We choose Simulated Annealing (SA) to compute the solution for the optimization problem. SA is believed to help avoiding local minima by probabilistically allowing a candidate configuration to be the best known solution temporarily even if the configuration is not the best available solution at that time. This is called muting. The muting occurs at a faster rate at the start of the optimization process and decreases as the process goes on. This behavior is analogous to first heating at high temperature and then cooling the object. There are many cooling schedules used in literature, e.g., Boltzmann annealing (BA) and Fast annealing (FA) schedules, etc., \cite{BA,FA}. We employ FA in this work. In FA, it is sufficient to decrease the temperature linearly in each step $j$ such that,
\begin{equation}
\label{eqn:BA} T_j = \frac{T_0}{c_{\rm sa}j+1}
\end{equation}
where $T_0$ is a suitable starting temperature and $c_{\rm sa}$ is a constant adjusted according to the requirements of the problem.

Algorithm \ref{algorithm} shows the pseudocode for application of SA to our optimization problem at a single temperature iteration. The MDP models the problem for finite $B,N$ while the matrix $\mathbf{Q}$ represents a candidate configuration for SA. In SA, one transition probability fulfilling $\mathcal{C}_1-\mathcal{C}_4$ (in (\ref{eqn:optimization})) in $\mathbf{\hat{Q}}$ is varied randomly in a single iteration and the objective function, $\left(\frac{E_{\rm b}}{N_0}\right)_{\rm
sys}$ in (\ref{eqn:energy_function}), is computed \emph{only} if the constraint $\mathcal{C}_2$ is satisfied. If the solution improves the previous best solution, the new configuration, i.e., the proposed $\mathbf{\hat{Q}}$, is selected as the current best solution, discarded otherwise. However, with a certain probability, the new matrix $\mathbf{\hat{Q}}$ can be selected as the best solution even if it does not improve the previous best solution. This helps to avoid local minima as explained earlier. After $n$ iterations, the temperature is lowered down according to (\ref{eqn:BA}) and the routine is repeated untill the temperature reaches its lower limit and the computed end solution is considered optimal. We omit the details of SA scheme here due to space limitations but the reader is referred to \cite{SA_1,SA_2} for the details of SA algorithm.

\renewcommand{\baselinestretch}{1.0}
\begin{algorithm}
\caption{SA Algorithm iterations for a single iteration at temperature T}
\KwIn{$(\mat{Q^*},E^*,T,\theta_{\rm tar}$)\;}
$E^*$= current minimum energy\;
$\mat{Q^*}$=  current probability matrix solution \;
\tcc{Generation of $n$ random $\mat{\hat{Q}}$ for a temperature $T$.}
\For{i=0 \KwTo n}{
 Generate a random $\mat{\hat{Q}}$ and compute $\theta_r$ for $\mat{\hat{Q}}$ \;
  \tcc{Evaluating $\mathcal{C}_2$}
  \If{$(\theta_r<\theta_{\rm tar}$)}{
  Compute energy $\hat{E}$ as a function of $\mat{\hat{Q}}$\;
  \If {(Muting is \TRUE)}{
        $\mat{Q^*}=\mat{\hat{Q}}$\;
        \tcc{Energy update step.}
     \If {($\hat{E}< E^*$)}{
        $E^* =\hat{E}$;

        }
     }
   }
  }
\Return{$(E^*,\mat{Q^*}$);}
\label{algorithm}
\end{algorithm}

\section{Characterization of Buffer Size}
\label{sect:buffer_size}
In this section, we characterize the effect of buffer size on system energy. For a fixed $N$, an increase in buffer size $B$ causes increase in quantization levels for (per state) fading vector. Larger the quantization levels, the better the use of fading to improve energy consumption in our scheduling scheme. Technically, larger the buffer size, the more is the waiting time for a specific packet to wait for an optimal time slot to get scheduled. This result follows directly from the finite horizon optimal stopping theory.

To characterize the effect of buffer size on our scheme, we compare two systems with equal $N$: a system with no buffering and a system with buffer size $B$. As $M=N+B$, denoting continuity constraint parameter for a zero buffer system by $\acute{N}$ implies $\acute{M}=\acute{N}$ and (\ref{eqn:drop_cons2}) reduces to
\begin{eqnarray}
\theta_r = \sum_{p=0}^{\acute{N}-1}\alpha_{p(p+1)}\pi_p~.
\label{eqn:drop_cons3}
\end{eqnarray}
Eq. (\ref{eqn:drop_cons3}) for $B=0$ is identical to (\ref{eqn:drop_cons2}) for a system with $B>0$ in terms of number of states summed over as $B$ is added in both upper and lower limits to get (\ref{eqn:drop_cons2}). The only difference between the two systems emerges from the fact that a proportion of the packets for a buffering system is scheduled before reaching state $B$ in MDP. Let $\lambda_B$ denote the effective rate of unscheduled packets entering state $B$ (as compared to one in case of no buffering) and given by
\begin{eqnarray}
\lambda_B=\alpha_{(B-1)B}\pi_{B-1} = 1-\sum_{p=0}^{B-1}\sum_{q=0}^p\alpha_{pq}\pi_p
\label{eqn:drop_cons4}
\end{eqnarray}
The increased energy efficiency in the system with buffering is due to the packets scheduled in first $B$ (0 to $B-1$) states as explained in the following.
\begin{enumerate}
  \item {The packets scheduled in first $B$ states help to reduce the number of packets to be scheduled in the remaining $N$ states while $\theta_{\rm tar}$ is the dropping rate for the packets entering the system. As a fraction $1-\lambda_B$ of the packets is scheduled before state $B$, a larger proportion of the remaining packets can be dropped in remaining $N$ ($B$ to $M$) states to meet $\theta_{\rm tar}$ as compared to the system without buffering which increases DoF and helps to reduce system energy. Thus, due to scheduling in additional $B$ states, the resulting system energy for the dropping probability $\theta_{\rm tar}$ and buffer size $B$ is equivalent to the system energy for the case $B=0$ and dropping probability $\theta_{\rm tar}/\lambda_B$. In numerical results, we quantify that the energy gain due to factor $\theta_{\rm tar}/\lambda_B$ provides a lower bound for the improvement as additional gain is achieved in scheduling in first $B$ states as explained below.}
  \item {The packets scheduled in first $B$ states result in smaller average $E_b/N_0$ as compared to average $E_b/N_0$ in the remaining $N$ states. This follows from Property 3 and (\ref{eqn:prop3}). An increasing scheduling threshold implies smaller energy consumption in the scheduled packets in first $B$ states.}
\end{enumerate}

\subsection{Limiting the Buffer Size}
In practice, the continuity constraint parameter values are of the order of a few tens. Our results in Section \ref{sect:results} show that increasing the value of $B$ for a fixed $N$ results in a decrease of energy. Naturally, one would like to have a large value for $B$ to maximize the energy gain. However, $B$ cannot be increased indefinitely due to the following phenomenon.

For a fixed $N$, the system energy saturates at some $\theta_{\rm lim}=\theta_{\rm tar}$ and $\theta_{\rm tar}>\theta_{\rm lim}$ does not improve the energy efficiency (c.f. Lemma 1 in \cite{majid_TWC:13} as reviewed briefly in Appendix \ref{app:review}) where $\theta_{\rm lim}$ is the solution of (\ref{eqn:objective}) without applying $\mathcal{C}_2$ in (\ref{eqn:optimization}). In the following, we study the effect of buffer size $B$ on this particular parameter. We observe from the numerical results that increasing $B$ for a fixed $N$ shifts $\theta_{\rm lim}$ towards zero average dropping probability, i.e., increasing $B$ further does not help to increase the system energy efficiency significantly.

We deduce the following Lemma based on our evaluation.
\begin{lemma}
There exists a value of $B$ that maximizes the energy gain via buffering for a fixed $N$. It is independent of $\theta_{\rm tar}$ and denoted by $B_{m}(N)$.
\end{lemma}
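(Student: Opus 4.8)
The plan is to establish the Lemma by combining the two sources of buffering gain identified in the preceding discussion and showing that the first one saturates at a finite $B$. First I would make the key observation precise: for a fixed $N$, the energy-relevant quantity driving the first gain mechanism is the ratio $\theta_{\rm tar}/\lambda_B$, where $\lambda_B=\alpha_{(B-1)B}\pi_{B-1}$ from (\ref{eqn:drop_cons4}) is the effective rate of unscheduled packets entering state $B$. The energy of the buffering system with parameters $(B,N,\theta_{\rm tar})$ is lower-bounded in terms of the energy of the zero-buffer system with parameters $(N,\theta_{\rm tar}/\lambda_B)$ (mechanism~1), plus the strictly positive extra saving coming from the larger scheduling thresholds in the first $B$ states (mechanism~2, via Property~3 and (\ref{eqn:prop3})). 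So the total energy is a function of $B$ through two channels: through $\lambda_B$ (hence through $\theta_{\rm tar}/\lambda_B$) and through the direct front-end scheduling saving.

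Next I would invoke Lemma~1 of \cite{majid_TWC:13} (reviewed in Appendix \ref{app:review}): for the zero-buffer system with fixed $N$, the system energy as a function of its drop target $\theta$ is non-increasing and \emph{saturates} at some $\theta_{\rm lim}(N)$, i.e.\ for $\theta\ge\theta_{\rm lim}(N)$ the energy is constant. Now trace what happens as $B$ grows: each additional buffering state gives one more time slot of opportunistic waiting, so $\lambda_B$ is non-increasing in $B$ (more packets get scheduled before reaching state $B$), hence the effective target $\theta_{\rm tar}/\lambda_B$ is non-decreasing in $B$. The moment this effective target reaches $\theta_{\rm lim}(N)$ — call the corresponding buffer size $B_m(N)$ — mechanism~1 contributes nothing further: increasing $B$ beyond $B_m(N)$ cannot reduce the energy attributable to the remaining $N$ states any more. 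Crucially, the threshold condition $\theta_{\rm tar}/\lambda_{B_m}=\theta_{\rm lim}(N)$, rearranged, says $\lambda_{B_m}=\theta_{\rm tar}/\theta_{\rm lim}(N)$; but $\lambda_B$ itself, being determined by the optimal scheduling in the first $B$ states under the same per-entry drop budget, does not depend on $\theta_{\rm tar}$ once we are in the saturated regime — this is the step that delivers the claimed independence of $\theta_{\rm tar}$. Then I would argue the second mechanism also contributes no additional gain beyond $B_m(N)$: once the first $B_m(N)$ states already carry enough early-scheduled packets to saturate the drop budget in the tail, there is no incentive (in the energy-minimizing solution of (\ref{eqn:objective})) to push scheduling even earlier, so the front-end thresholds and their associated energy stabilize. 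Hence $B_m(N)$ maximizes the buffering gain, it is finite, and it depends only on $N$ (through $\theta_{\rm lim}(N)$ and the shape of $\lambda_B$), not on $\theta_{\rm tar}$.

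The main obstacle I anticipate is making the decoupling argument fully rigorous rather than heuristic. The quantity $\lambda_B$ is defined in terms of the \emph{optimized} transition probabilities, which in principle are coupled across all $M=B+N$ states through the global constraint $\mathcal{C}_2$; so asserting that the optimal first-$B$-state behavior is insensitive to $\theta_{\rm tar}$ in the saturated regime needs care. I would handle this by observing that, once $\theta_{\rm tar}\ge\theta_{\rm lim}$ for the tail subproblem, constraint $\mathcal{C}_2$ is inactive there, so the tail states solve an \emph{unconstrained} energy-minimization whose value is exactly the saturation energy from Lemma~1 of \cite{majid_TWC:13}; the first $B$ states then solve their own energy-minimization with only the ``feed-forward'' coupling $\lambda_B$, and since $\mathcal{C}_2$ is slack the Lagrange multiplier on it is zero, which formally removes $\theta_{\rm tar}$ from the first-block optimality (KKT) conditions. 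This is the load-bearing claim and the place where I would spend the bulk of the rigor; the monotonicity of $\lambda_B$ in $B$ and the positivity of the mechanism-2 saving are comparatively routine, following from the optimal-stopping monotonicity already stated as Property~3.

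Since the Lemma as stated is essentially an existence-and-independence claim supported by the numerical study, I would also note explicitly that the proof establishes existence of $B_m(N)<\infty$ and its $\theta_{\rm tar}$-independence, while the precise value of $B_m(N)$ is read off numerically in Section \ref{sect:results}; no closed form is claimed.
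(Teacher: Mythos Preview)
Your overall strategy differs from the paper's, and the gap shows up exactly where you anticipated it would: the independence of $B_m(N)$ from $\theta_{\rm tar}$. Your defining condition for saturation is $\theta_{\rm tar}/\lambda_{B_m}=\theta_{\rm lim}(N)$, i.e.\ $\lambda_{B_m}=\theta_{\rm tar}/\theta_{\rm lim}(N)$. You then assert that $\lambda_B$ is $\theta_{\rm tar}$-free in the saturated regime, but that observation cuts the wrong way: if the left-hand side $\lambda_{B_m}$ is a function of $B_m$ only while the right-hand side is proportional to $\theta_{\rm tar}$, then solving for $B_m$ necessarily produces a $\theta_{\rm tar}$-dependent quantity. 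The KKT argument you sketch shows that the \emph{optimizer} in the front block is insensitive to $\theta_{\rm tar}$ once $\mathcal{C}_2$ is slack, but it does not neutralize the explicit $\theta_{\rm tar}$ sitting in your saturation criterion. In short, the $\theta_{\rm tar}/\lambda_B$ equivalence and the zero-buffer saturation from \cite{majid_TWC:13} are the lower-bound heuristics of the preceding paragraph, not the Lemma's actual mechanism, and stitching them together does not deliver the claimed independence.

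The paper's proof avoids this trap by never introducing $\theta_{\rm tar}$ at all. It works directly with the \emph{buffered} system's own limiting drop rate $\theta_{\rm lim}=\theta_{\rm lim}(B,N)$, obtained by solving (\ref{eqn:objective})--(\ref{eqn:optimization}) \emph{without} constraint $\mathcal{C}_2$. Since $\theta_{\rm lim}=\sum_{p=B}^{M-1}\alpha_{p(p+1)}\pi_p$ from (\ref{eqn:drop_cons2}) and each steady-state probability $\pi_p$ for $B\le p\le M-1$ decreases as $B$ grows (a buffered packet gets $B$ scheduling opportunities before any drop can occur), $\theta_{\rm lim}$ is monotone non-increasing in $B$ and tends to zero for $B/N\gg 1$. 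The value $B_m(N)$ is then the $B$ at which $\theta_{\rm lim}$ reaches (effectively) zero; because $\theta_{\rm lim}$ was computed with $\mathcal{C}_2$ removed, it is manifestly independent of $\theta_{\rm tar}$, and so is $B_m(N)$. The argument is shorter, and the $\theta_{\rm tar}$-independence is immediate by construction rather than something to be recovered after the fact.
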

\begin{proof}
We obtain $\theta_{\rm lim}=\theta_r$ from the solution of the problem solely constrained by the continuity constraint, i.e., by solving (\ref{eqn:optimization}) without constraint $\mathcal{C}_2$ for forward transition probabilities $\alpha_{p(p+1)}$ for $B\leq p\leq M-1$. As $B$ increases with respect to $N$, the steady state probabilities $\pi_p$ for $B\leq p\leq M-1$ decrease because a buffered packet gets $B$ opportunities to exploit multiuser diversity before it gets dropped if not scheduled. From (\ref{eqn:drop_cons2}), $\theta_{\rm lim}$ depends on $\pi_p$ for $B\leq p\leq M-1$ as well, i.e., $\theta_{\rm lim}(\pi_B,\pi_{B+1},...,\pi_{M-1})$. If one component $\pi_B,\pi_{B+1},...,\pi_{M-1}$ is decreasing, $\theta_{\rm lim}$ decreases for a fixed $N$, i.e., $\frac{\partial \theta_{\rm lim}(\pi_B,\pi_{B+1},...,\pi_{M-1})}{\partial \pi_l} \geq 0$ for $B \leq l \leq M-1$. Eventually $\theta_{\rm lim}\to 0$ for $B/N>>1$.
\end{proof}

Thus, parameter $N$ constrains $B$ via Lemma 1. However, this maximization is achieved at $\theta_{\rm lim}=0$ which implies that the value of $\theta_{\rm tar}$ and $N$ is irrelevant as it is not permissible to drop any packet regardless of $N$. Therefore, the value of $B<B_m(N)$ should be designed such that the DoF available from $N$ and $\theta_{\rm tar}$ can be utilized effectively to improve the system energy efficiency in conjunction with parameter $B$. 

\section{Suboptimal Scheduling Schemes}
\label{sect:suboptimal}
The computational complexity of the scheme depends on the number of quantization levels (thresholds) per state which in turn depend on buffer size.
In practice, the buffer size is of the order of a few tens to hundreds. In this case, the "Best"\footnote{Please note that we avoid using term optimal as the solution of the scheme presented in Section \ref{sect:scheme} cannot be proven optimal in the mathematical sense. To differentiate this scheme with low complexity schemes, we term the scheme as "Best" in the subsequent work.} scheduler presented in our work results in computational complexity (for the thresholds) of the order $\mathcal{O}(MB)$ and in the case $B>>N$, it becomes $\mathcal{O}(B^2)$. In the following, we propose suboptimal schedulers which reduce the complexity at the marginal energy loss. These schedulers exploit non-uniform distribution of transition probabilities in the original optimal scheme. For every state $p$, the original scheduling scheme is based on the idea of allowing scheduling of multiple packets for opportunistic use of good channels.

However, the computational complexity can greatly be reduced by merging some of the transition probabilities in a smart way. All the suboptimal schemes share one property that scheduling of at least one packet per state must be facilitated to maximize the satisfaction of continuity constraint before reaching $M^{\rm th}$ state. Note that the forced transmission in $M^{\rm th}$ time slot results in large energy expenditure.

\subsection{One-Or-All Scheduler}
This scheduler is the simplest in terms of computational complexity and implementation. Instead of having the option of scheduling up to $\mu+1$ packets in a state $p$, the user is limited to schedule either one, $\mu+1$ or no packet at all, thereby this scheme is called One-Or-All (OOA) scheduling. All other transition probabilities are set to zero. The idea is motivated by the emptying buffer scheme in \cite{majid_WCL12} where a user either empties the buffer when scheduled or waits for the next time slot.
The computational complexity for this scheme is $\mathcal{O}(B)$ for $B/N>>1$ as only $2B$ transition probabilities need to be optimized.

Following the derivation in (\ref{eqn:SVU_cdf3}), the cdf of the VUs for OOA is given by
\begin{equation}
{\rm \acute{P}}_{f,\rm VU}(y)=\sum\limits_{p=0}^M c_p\pi_p
\begin{cases}
(\mu+1){\rm P}_{f}(y)-\\
\mu{\rm P}_{f} (\kappa_{p0})-{\rm P}_{f}(\kappa_{p\mu})&y>\kappa_{p0}\\
{\rm P}_{f}(y)-{\rm P}_{f}(\kappa_{p\mu})&\kappa_{p\mu}< y\leq \kappa_{p0}\\
\end{cases}
\label{eqn:SVU_cdf4}
\end{equation}

\subsection{Selective State with Exponential Merging (SSE) }
As OOA scheduler quantizes every channel state into 2 levels per state, the performance is expected to degrade rapidly as compared to the "Best" one. A tradeoff between OOA and the "Best" scheduler is Selective State with Exponential merging (SSE) scheduler with complexity $\mathcal{O}(B\log B)$. Out of the possible state transitions, one state transition is dedicated for allowing scheduling of a single packet to maximize the probability of meeting the continuity constraint after dropping $N$ packets successively. Thus, the non-zero probability for the state transition from state $p$ to ${\mu}$ is a must for every state $p$. For the selection of other possible state transitions, we propose the following method where we choose thresholds exponentially.

For a state $p$, we observe in transition probability matrix for the "Best" scheme that the state transition probabilities are quite high for state zero, state $p$ and the states which are closer to zero. Therefore, other than $\alpha_{p\mu}$ (as in OOA), a natural choice of allowed transition probabilities is as follows.

For a state $p$, select a vector of possible next states by
\begin{equation}
\vec{q}=[2^0-1,2^1-1\dots2^{\lfloor\log_2(\mu+1)\rfloor}-1,\mu]
\label{eqn:q_vec}
\end{equation}
where the state transition from $p$ to ${\mu}$ is appended in the end only if it is not already contained by the vector, i.e., $2^{\lfloor\log_2(\mu+1)\rfloor}-1\ne\mu$.
The next states for the state transitions are more concentrated for the states with small $q$ and sparsity increases exponentially following our observation for the transition probability matrix for the "Best" scheme. Please note that smaller $q$ implies transmission of more packets and vice versa. The loss in merging states with large $q$ is small as compared to merging states with small $q$.

For this scheme, the corresponding state transition matrix $\mathbf{Q}_{\rm SSE}$ for a system with $N=2$ and $B=3$ is given by
\begin{equation}
\mathbf{Q}_{\rm SSE} = \left( \begin{array}{cccccc}
{\alpha}^s_{00} & {\alpha}^b_{01}&0  &0&0&0 \\
{\alpha}^s_{10} & {\alpha}^s_{11} & {\alpha}^b_{12} &0&0&0 \\
{\alpha}^s_{20} & {\alpha}^s_{21} &{\alpha}^s_{22}& {\alpha}^b_{23}&0&0\\
{\alpha}^s_{30} & {\alpha}^s_{31}  & 0 & {\alpha}^s_{33}&\alpha^d_{34}&0\\
{\alpha}^s_{40} & {\alpha}^s_{41}  & 0 & {\alpha}^s_{43}&0&{\alpha}^d_{45}\\
{\alpha}^s_{50} & {\alpha}^s_{51}  & 0& {\alpha}^s_{53}&0&0\
\end{array} \right) \label{eqn:Drop_STM_EM}.
\end{equation}

The general closed form expression for computation of VUs' channel distribution in terms of $p$ is not straight forward due to its dependency on vector $\vec{q}$. For every $p$ in matrix $\mathbf{Q}_{\rm SSE}$, the general form of the expression varies and can be derived using the derivation technique explained in Section \ref{sect:scheme} for the "Best" scheme. Thus, for every $p$, corresponding $\vec{q}$ is computed using (\ref{eqn:q_vec}) which determines the channel distribution as follows:

The channel distribution of VUs for SSE is conditioned on state $p$ (as before) and given by
\begin{equation}
{\rm \check{P}}_{f,\rm VU}(y)=\sum\limits_{p=0}^M c_p\pi_p {\rm P}_f(y|p)
\label{eqn:SVU_SS1}
\end{equation}
where ${\rm P}_f(y|p)$ in turns is conditioned on $\vec{q}$ such that:\newline
For $|\vec{q}|=1$,
\begin{equation}
{\rm P}_f(y|p)={\rm P}_{f}(y)-{\rm P}_{f}(\kappa_{p\mu})\quad y> \kappa_{p0}
\end{equation}
For $|\vec{q}|=2$,
\begin{equation}
{\rm P}_f(y|p)=\begin{cases}
(\mu+1)
{\rm P}_{f}(y)-{\rm P}_{f} (\kappa_{p0})&y>\kappa_{p0}\\-{\rm P}_{f}(\kappa_{p1})\\
{\rm P}_{f}(y)-{\rm P}_{f}(\kappa_{p\mu})& \kappa_{p\mu}< y\leq \kappa_{p0}\\
\end{cases}
\end{equation}
For $|\vec{q}|=3$,
\begin{equation}
{\rm P}_f(y|p)=\begin{cases}
(\mu+1){\rm P}_{f}(y)-{\rm P}_{f} (\kappa_{p0})&y>\kappa_{p0}\\-(\mu-1){\rm P}_{f} (\kappa_{p1})-{\rm P}_{f}(\kappa_{p\mu}) \\
\mu{\rm P}_{f}(y)-(\mu-1){\rm P}_{f} (\kappa_{p1})&\kappa_{p1}< y\leq \kappa_{p0}\\-{\rm P}_{f}(\kappa_{p\mu})\\
{\rm P}_{f}(y)-{\rm P}_{f}(\kappa_{p\mu})&\kappa_{p\mu}< y\leq \kappa_{p1}
\end{cases}
\end{equation}
Following the proposed framework, channel distribution can be calculated for all $|\vec{q}|,p$.

\section{Numerical Evaluation}
\label{sect:results}
In this section, we provide some numerical examples to demonstrate the potential gain of our scheme. $K$ users are uniformly distributed in a circular cell except a forbidden region of radius $\delta=0.01$ around the access point and the path loss follows the distribution in \cite{majid_TWC:13,Ralf1}. We assume Rayleigh fading with mean one and the path loss is exponential with exponent 2. The value of $C$ is fixed to $0.5$ bits/s/Hz.

First, we present an example of numerically computed $\mathbf{Q}_{\rm bst}$ matrix for the "Best" scheme. The matrix is a function of scheduling thresholds as computed in Appendix \ref{app:k_relation}. For example, we have the following matrix for $N=2,B=1$ and $\theta_{\rm tar}=0.02$.

\begin{equation}
\mathbf{Q}_{\rm bst} = \left( \begin{array}{cccc}
0.69& 0.31& 0& 0\\
0.76&0.17&0.07&0\\
0.67&0.27&0&0.06\\
0.64&0.36&0&0
\end{array} \right)
\label{eqn:Drop_STM1}
\end{equation}

\begin{figure}
    \centering
   \includegraphics[width=3.5in]{./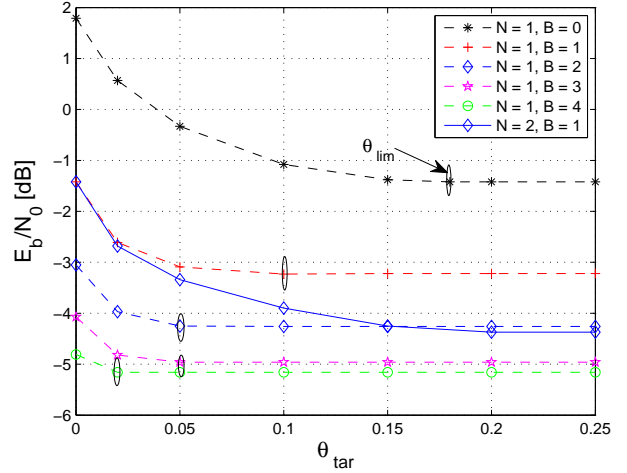}
     \caption{The system energy as a function of average packet dropping probability and buffer size parameters for a fixed $N$. The values for $\theta_{\rm lim}$ have been pointed with elliptical shapes for every curve.}
  \label{fig:system_energy}
  \vspace{-0.3cm}
\end{figure}

\begin{figure*}[!t]
 \centering
 \subfigure[$\theta_{\rm tar}=0$]
 {\includegraphics[width=3.5in]{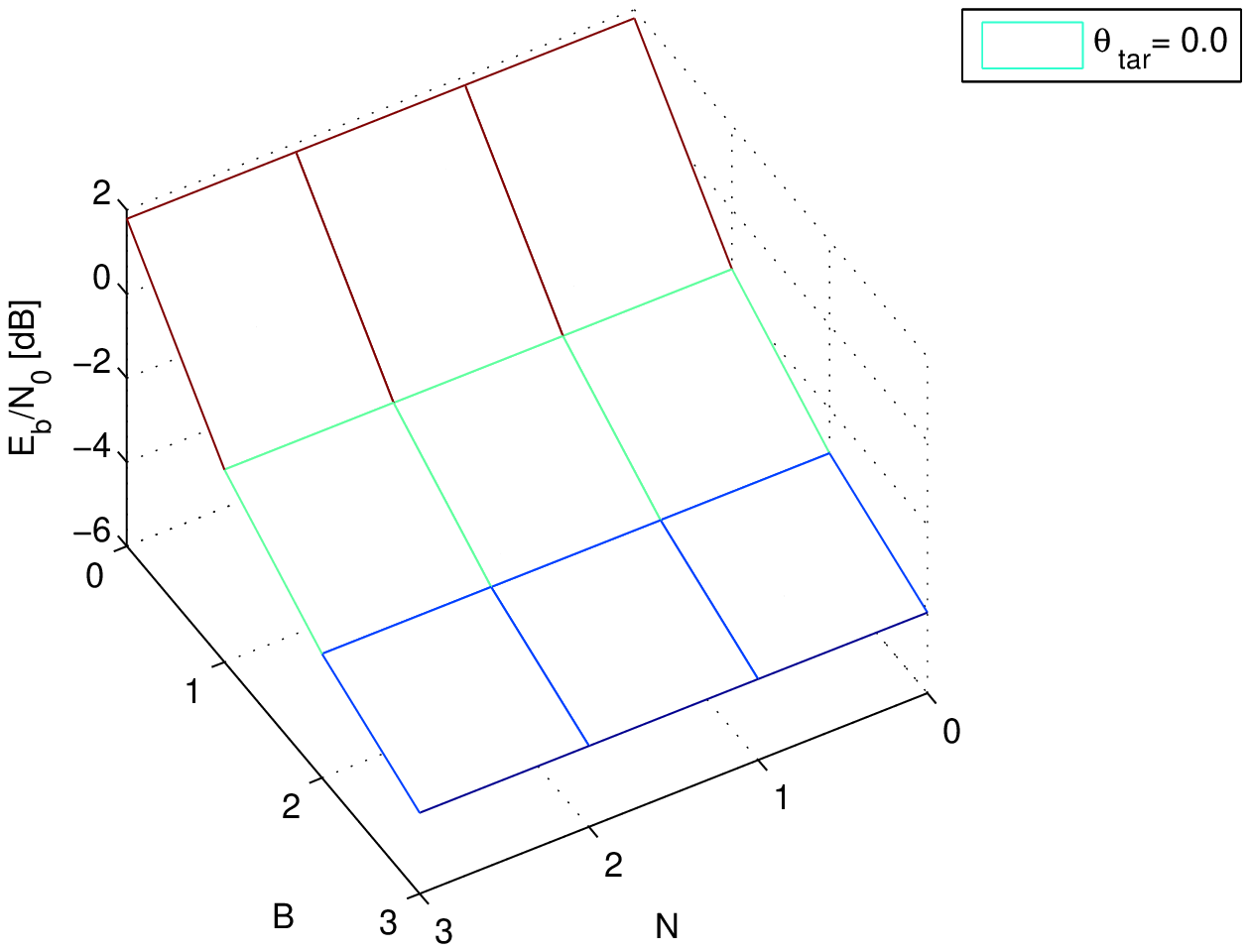}
 \label{fig:3d_a}}
\subfigure[$\theta_{\rm tar}=0.05$]
 {\includegraphics[width=3.5in]{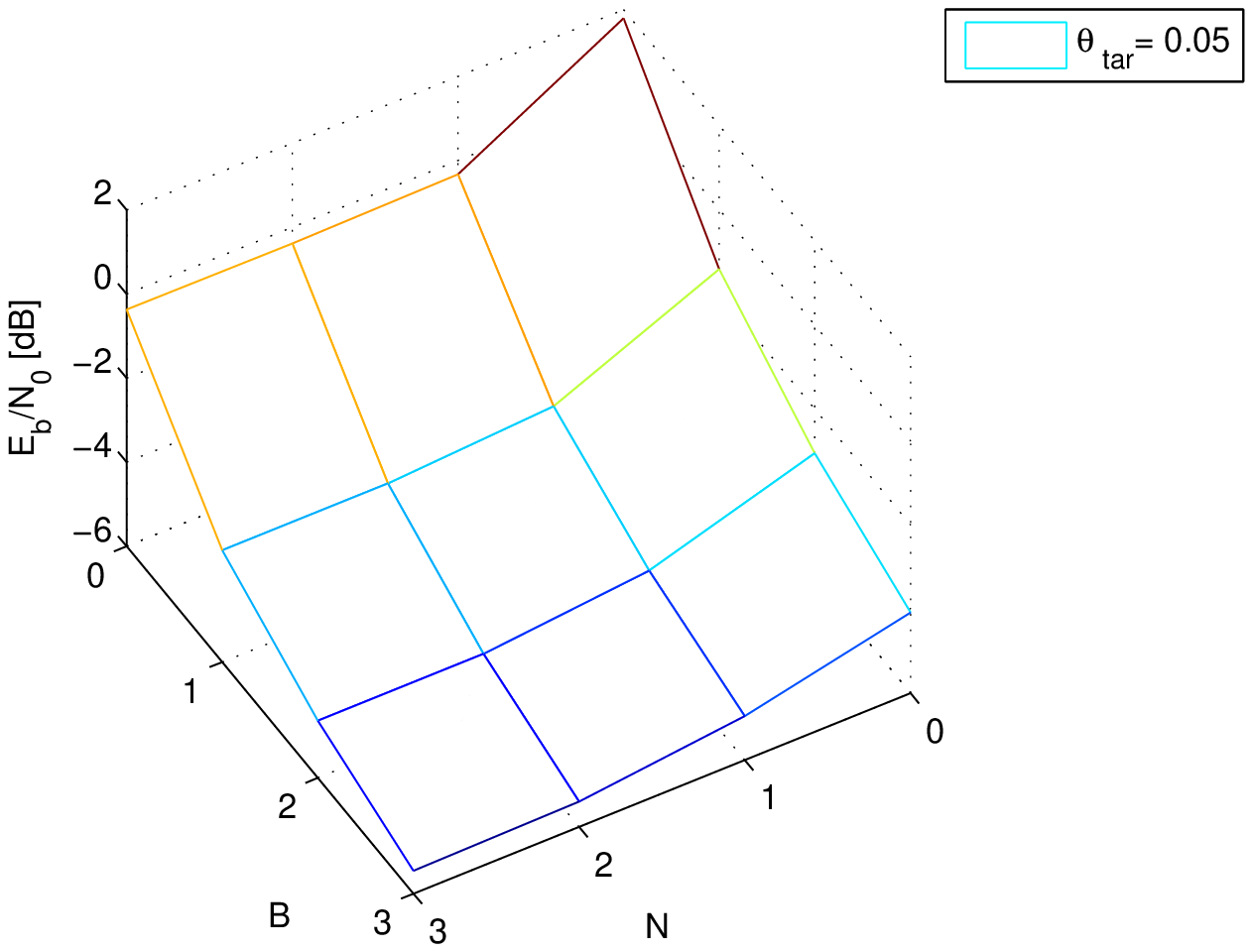}
 \label{fig:3d_b}}
  \caption{The system energy as a function of parameters $B,N$ and $\theta_{\rm tar}$ in a 3-dimensional illustrations.}
  \label{fig:3d_illustraton}
  \vspace{-0.3cm}
\end{figure*}

We show system energy against target average packet drop probability in Fig. \ref{fig:system_energy} and focus on the effects of buffering on a system constrained by parameters $\theta_{\rm tar},N$. For a fixed $N$, the system becomes more energy efficient as $B$ increases due to the effects characterized in Section \ref{sect:buffer_size}. For the case $B=1,N=1,\theta_{\rm tar}=0.02$, the parameter $\theta_{\rm tar}/\lambda_B$ equals $0.09$ from the matrix in (\ref{eqn:Drop_STM1}). This implies a system with buffer size one and $\theta_{\rm tar}=0.02$ is more energy efficient than a system with $B=0$ and $\theta_{\rm tar}=0.09$ as evident from Fig.~\ref{fig:system_energy}. Thus, flexibility in latency requirements helps to combat the transmission challenges emerging from the finite packet dropping parameters. We observe also that the energy expenditure for the parameters $B=2,N=1$ is substantially smaller than the case with parameters $B=1,N=2$ (where $M=3$ for both cases) at small $\theta_{\rm tar}$, but the opposite holds at large $\theta_{\rm tar}$. Thus, it is important to realize the operating region for the system to maximize the advantage from DoFs.

As evident from Fig. \ref{fig:system_energy}, an increase in value of $B$ results in decrease in the value of $\theta_{\rm lim}$ as explained in Lemma 1. In this particular numerical example, $B_m(N)=6$ for $N=1$. However, allowing $B=6$ is not beneficial from the energy point of view as no packet loss tolerance is exploited. If the system is loss tolerant in terms of average packet drop rate, $B<B_m(N)$ can be designed to maximize the energy efficiency without wasting extra buffer (cost). For example, for $\theta_{\rm tar}=0.05$ and $N=1$, $B=3$ realizes nearly the same energy as for the cases $B>3$ at the reduced cost of buffering.

Fig. \ref{fig:3d_illustraton} exhibits the effects of different combinations of parameter sets $B,N$ and $\theta_{\rm tar}$ on the system energy in a 3-dimensional plot where the effect of each parameter in different operating regions is evident. Fig.~\ref{fig:3d_a} for parameter $\theta_{\rm tar}=0$ is a special case where $N$ becomes irrelevant as zero average packet drop rate implies that the system is lossless and thus, $N>0$ does not help (as no packet can be dropped). However, if $\theta_{\rm tar}=0$, an increase in the value of $B$ does help to make the system energy efficient as shown in Fig. \ref{fig:3d_b}. A system with parameters $B=2,N=1, \theta_{\rm tar}=0$ is almost as energy efficient as a system with $B=1,N=1$ and $\theta_{\rm tar}\simeq0.05$, i.e., with an additional freedom in average dropping probability.
\begin{figure}
\centering
   \includegraphics[width=3.5in]{./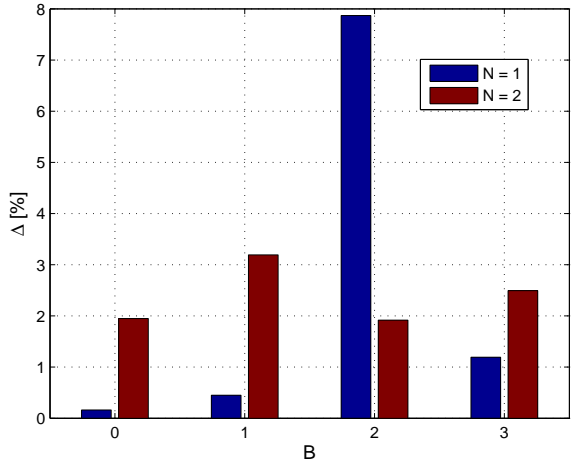}
  \caption{The accuracy measure $\Delta[\%]$ as a function of buffer size and continuity constraint parameters while $\theta_{\rm tar}=0.05$ for all simulations. The number of temperature iterations are $100$ while $50*(M+1)$ random configurations of $\mathbf{Q}$ are simulated at one temperature. The value of initial temperature $T_0$ can be fixed according to the problem requirements.}
  \vspace{-0.3cm}
  \label{fig:accuracy}
\end{figure}

To measure the relative accuracy of the computed solution for the SA algorithm, we define a parameter $\Delta$ by
\begin{equation}
\Delta=1-\frac{\theta_r^*}{\min(\theta_{\rm tar},\theta_{\rm lim})}
\end{equation}
where $\theta_{r}^*$ is computed for a given $\theta_{\rm tar}$ by using (\ref{eqn:drop_cons2}) for the "Best" solution $\mathbf{Q}^*$. This measure specifies how close $\theta_r^*$ for the solution to the targeted maximum value of average packet drop is. Though, it is not true always that the smaller $\Delta$ guarantees more energy efficiency as there exist some combinations of transition probabilities which result in smaller $\theta_r$ but larger energy. A small $\Delta$ implies that the computed solution is sufficiently exploiting the DoF inherited by the system through parameter $\theta_{\rm tar}$.

Fig.~\ref{fig:accuracy} shows $\Delta$ for different combinations of parameters $B$ and $N$. We observe that $\Delta$ is quite small for the computed solutions for the "Best" scheme. As SA is a heuristic algorithm, there is no consistent pattern in the values of $\Delta$. In general, for a fixed number of temperature iterations, the computation of the solution is expected to be hard as the number of parameters involved increases, i.e., dense $\mathbf{Q}$ and large number of thresholds.
If $\Delta$ is large for a parameter set, the number of iterations are increased to improve the solution.

\begin{figure*}[!t]
 \centering
 \subfigure[$E_b/N_0$ for the "Best" and OOA schemes for Rayleigh fading distribution.]
 {\includegraphics[width=3.5in]{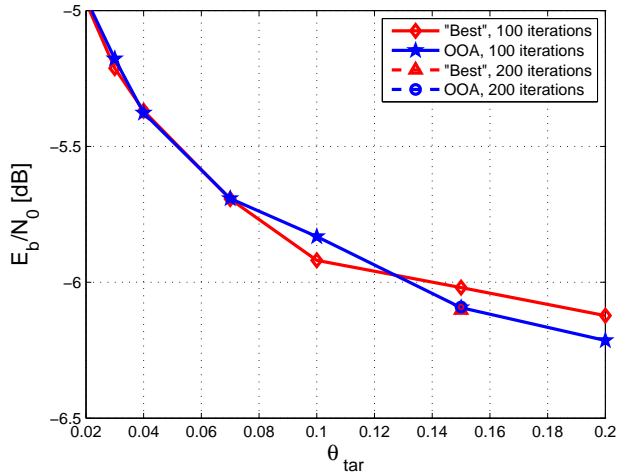}
 \label{fig:energy_comp}}
\subfigure[$\theta_r$ for the "Best" and OOA schemes for Rayleigh fading distribution.]
 {\includegraphics[width=3.5in]{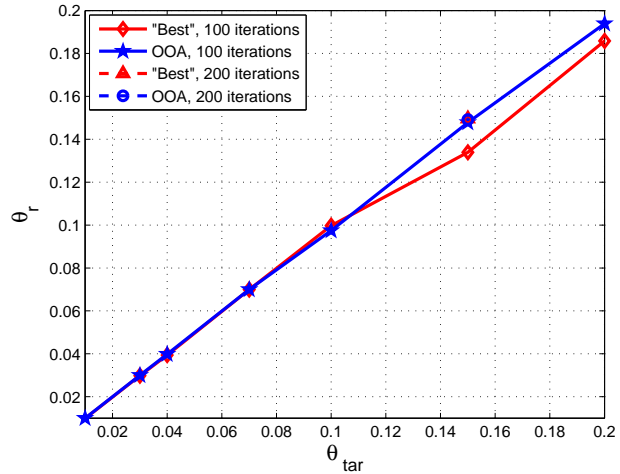}
 \label{fig:theta_comp}}
  \caption{The comparison of the "Best" scheme against suboptimal low complexity OOA scheme. The simulation parameters are $B=3, N=3$. Note that the curves for both schemes with 200 iterations per temperature value are overlapping with OOA curve with 100 iterations and are hardly distinguishable.}
  \label{fig:comparison}
  \vspace{-0.3cm}
\end{figure*}

Table \ref{comparison} illustrates the comparison between the 'Best' and low complexity sub optimal schemes. Note that all the schemes fulfill the continuity and average packet drop constraints effectively. As expected, OOA is the least complex scheme but energy loss is substantial as compared to the "Best" one at $\theta_{\rm tar}=0.10$. The SSE scheme is more complex than OOA; but improves energy performance over OOA while energy loss is minimal as compared to the "Best" one.

\begin{table}[t!]
\renewcommand{\arraystretch}{1.0}
\caption{Comparison of Schemes for $N=3,B=3,C=0.5$ bits/s/Hz}
\label{comparison}
\begin{center}
\begin{tabular}{|c||c|c|}
\hline
Scheme & Complexity &$E_b/N_0\big(\theta_{\rm tar}=0.1\big)$\\
\hline
"Best"&$\mathcal{O}(B^2)$&-5.92 dB\\
\hline
OOA&$\mathcal{O}(B)$&-5.83 dB\\
\hline
SSE&$\mathcal{O}(B\log B)$&-5.86 dB\\
\hline
\end{tabular}
\end{center}
\vspace{-0.3cm}
\end{table}

Fig. \ref{fig:comparison} compares the system energy and achieved average packet drop rate $\theta_r$ for the "Best" and OOA schemes against a target average dropping probability. At low $\theta_{\rm tar}$ both schemes are indistinguishable as DoF provided by $\theta_{\rm tar}$ for $N=3$ and $B=3$ is too small to be capitalized effectively by any scheme. As $\theta_{\rm tar}$ increases, the "Best" scheme shows better results than OOA as expected. However, for large $\theta_{\rm tar}$, OOA outperforms the "Best" one surprisingly. This is attributed to the fact that reduced complexity (less number of thresholds) of OOA helps to compute more accurate matrix $\mat{\rm Q}_{\rm OOA}^*$ (via SA) as compared to the "Best" scheme where $\mat{\rm Q}_{\rm bst}^*$ is dense. This effect is more pronounced at large $\theta_{\rm tar}$ due to more freedom in choosing transition probabilities in matrix $\mat{\rm Q}^*$.

To verify our observations numerically, we repeat the optimization by increasing number of iterations per single temperature from 100 to 200. We also plot the achieved $\theta_r$ for both schemes in Fig.~\ref{fig:theta_comp}. Initially, $\theta_r$ follows $\theta_{\rm tar}$ closely for both schemes. It is clear that $\theta_r$ for OOA is much closer to $\theta_{\rm tar}$ in the region when OOA outperforms the "Best" scheme. At $\theta_{\rm tar}=0.15$, increase in number of iterations in SA algorithm from 100 to 200 improves accuracy of $\mat{\rm Q}^*$ (and thus system energy in Fig.~\ref{fig:energy_comp}) for the "Best" scheme while it has negligible effect on OOA scheme as $\theta_r$ for OOA was already close to $\theta_{\rm tar}$. Thus, we conclude that the proposed low complexity schemes perform very close to the "Best" scheme and extensive fading vector quantization levels for the "Best"  scheme do not help much.

\section{Conclusion}
\label{sect:conclusions}
We investigate the tradeoff between the system energy of multiuser multi-access system and the packet drop tolerance of the applications characterizing the network traffic. In contrast to common approach of dropping a packet as a consequence of failing to provide a required rate to the users, we propose maximizing the use of packet drop tolerance by dropping as many packets as permissible without compromising the QoE for the users. The joint constraint on permissable average and successive packet drop poses interesting challenge in the optimization problem. We propose a packet scheduling scheme and analyze it using MDP under large user limit. As the formulated optimization problem is non-convex for a multiuser system, the heuristic solution is presented. We also propose suboptimal low complexity schemes which show negligible energy loss as compared to the proposed "Best" scheme. The numerical results evaluate trade-offs between the system energy and QoS parameters. The study reveals that system energy is influenced by different parameters in different operating regions and it is important to quantify the effect of each parameter to opportunistically make use of the channel for an energy efficient system design. In future work, we consider the proposed framework in more realistic scenarios when CSI is not available and only statistical guarantees can be provided on successive packet loss.

\appendices
\section{Relation between Thresholds and Transition Probabilities}
\label{app:k_relation}
There is one to one mapping between transition probabilities and thresholds. For a given set of parameters $B,N$ and small scale fading distribution ${\rm P}_{f}(y)$, the
transition probabilities are a function of the scheduling thresholds. Therefore, computing a set of
thresholds is equivalent to the computation of a set of transition probabilities.

For a fixed $M=B+N$ (assuming $B\geq 1$),
the transition probability matrix $\mathbf{Q}_{\rm bst}$ for the "Best" scheme in is expressed as,
\begin{eqnarray}
 \mathbf{Q}_{\rm bst}&=&\left(
\begin{array}{cccc}
{\rm Pr}(y\geq \kappa_{00})& {\rm Pr}(y< \kappa_{00}) & \cdots &0\\
{\rm Pr}(y\geq \kappa_{1 0})& {\rm Pr}(\kappa_{11}\leq y< \kappa_{10}) &\cdots &0 \nonumber\\
\ddots & \ddots & \ddots &\ddots\\
{\rm Pr}(y\geq \kappa_{M0})& {\rm Pr}(\kappa_{M1}\leq y<
\kappa_{M0}) &\cdots &0\\
\end{array}
\right)\\
\end{eqnarray}
where zero transition probability represents the impossible transition.

\section{Channel Modeling}
\label{app:channel}
In this work the channel model of \cite{majid_TWC:13,Ralf1} is used. Signal
propagation is characterized by a distance dependent path loss
factor and a frequency-selective short-term fading that depends on
the scattering environment around the user terminal. As described in
Section \ref{sect:system model}, these two effects are taken into
account by letting $h_k=s_k f_k$.

As in \cite{Ralf1}, we assume that users are uniformly distributed
in a geographical area but for a forbidden circular region of radius
$\delta$ centered around the base station where $0<\delta\leq 1$ is
a fixed system constant. Using this model, the cdf of path loss is
given by
\begin{equation}
{\rm F}_s(x) = \left\{
\begin{array}{c@{\qquad \qquad}c}
0& x < 1 \\
1-\frac{x^{-2/\alpha}-\delta^2}{1-\delta^2} &  1\leq x < \delta^{-\alpha}\\
1&x\geq \delta^{-\alpha}
\end{array}
\right. \label{eqn:path_loss}.
\end{equation}
where the path loss at the cell border is normalized to one.

For a Rayleigh channel, the distribution of small scale block fading is given by
\begin{equation}
{\rm P}_{f}(y)=1-\exp(-y) \label{eqn:fading}
\end{equation}
${\rm P}_{h}(x)$ is defined as the cdf of the random
variable $h_{k}=s_kf_{k}$. Recall from (\ref{eqn:SVU_cdf3}), the cdf
${\rm P}_{f,{\rm VU}}(y)$ of VUs for the "Best" scheme is a weighted function
of the cdf of actual fading ${\rm P}_{f}(y)$ given by (\ref{eqn:fading}). Using (\ref{eqn:SVU_cdf3}) and (\ref{eqn:path_loss}), we compute a convenient expression for the
cdf ${\rm P}_{h,{\rm VU}}(x)$ of the VUs for this product channel.
As path loss and Rayleigh fading occur simultaneously and
independently, the cdf of the channel gain is given by
\begin{equation}
{\rm P}_{h,{\rm VU}}(x)=\int F_s(x/y){\rm dP}_{f,{\rm VU}}(y).
\label{eqn:cdf_channel}
\end{equation}
Using path loss distribution in (\ref{eqn:path_loss}), (\ref{eqn:cdf_channel}) is computed as
follows
\begin{eqnarray}
{\rm P}_{h,{\rm VU}}(x)= \int_{0}^{x\delta^\alpha}{\rm p}_{f,\rm{
VU}}
(y){{\rm d}y}+\int_{x\delta^\alpha}^x F_s(x/y){\rm dP}_{f,{\rm VU}} (y)
 \label{eqn:cdf_channel_1}
\end{eqnarray}
Following the derivation in \cite{majid_TWC:13}, changing variables and integrating by parts yields,
\begin{equation}
{\rm P}_{h,{\rm VU}}(x)=
\frac{1}{x^{2/\alpha}(1-\delta^2)}\int_{x^{2/\alpha}
\delta^2}^{x^{2/\alpha}}{\rm P}_{f,{\rm VU}} (y^{\alpha/2}){\rm
{dy}}~. \label{eqn:arraycdf_channel_11}
\end{equation}
For $\alpha=2$, (\ref{eqn:arraycdf_channel_11}) can be written
in closed form.

For the "Best" scheduler, using (\ref{eqn:SVU_cdf3}) and the
Rayleigh fading model, (\ref{eqn:arraycdf_channel_11}) is given
by
\begin{eqnarray}
\label{eqn:cdf_product1}
{\rm P}_{h,{\rm VU}}(x)&&= \frac{1}{x(1-\delta ^2)}\int_{x
\delta^2}^{x}\sum\limits_{p=0}^M c_p\pi_p\Bigl[
L(p,y)\\&&(1-\exp(-y))
-\sum_{b=0}^{\mu-q}(1-\exp(-\kappa_{p(\mu-b)})) \Bigr] \rm{dy}~.\nonumber
 \label{eqn:cdf_product1}
\end{eqnarray}
Integrating (\ref{eqn:cdf_product1}) yields
\begin{eqnarray}
&&{\rm P}_{h,\rm {VU}}(x)=\sum\limits_{p=0}^M c_p\pi_p
\Bigl[L(p,y)\Bigl(1+\frac{1}{x(1-\delta^2)}\bigl(\exp(-x)\nonumber\\
&&-\exp(-x\delta^2)\bigr)\Bigr)
-\sum_{b=0}^{\mu-q}\bigl(1-\exp(-\kappa_{p
(\mu-b)})\bigr)\Bigr]~.
\label{eqn:cdf_product2}
\end{eqnarray}
Similarly, the channel distribution of VUs for OOA and SSE schedulers can be computed using (\ref{eqn:SVU_cdf4}) and (\ref{eqn:SVU_SS1}), respectively in (\ref{eqn:arraycdf_channel_11}).

\section{Review of Lemma in \cite{majid_TWC:13}}
\label{app:review}
The behaviour of system energy for a fixed $N$ and large $\theta_{\rm tar}$ has been characterized by a Lemma in \cite{majid_TWC:13}. We review the Lemma briefly here to keep this work independent and self-sufficient.
The lemma states:

For a fixed continuity constraint parameter $N$, there exists a finite $\theta_{\rm lim}$ such that for all $\theta_{\rm tar} > \theta_{\rm lim}$, the same maximum energy efficiency is achieved as for more restrictive $\theta_{\rm lim}$.

We evaluate (\ref{eqn:drop_cons2}) from $\mathbf{Q}^*$ to get the resulting average dropping probability $\theta_r^*$. For a fixed $N$, we will not be able to achieve further energy efficiency by dropping more packets for any $\theta_{\rm tar}>\theta_{\rm lim}$. The $\theta_r^*$ value for the special case is termed as \emph{limiting} average dropping probability and denoted by $\theta_{\rm lim}$ when we have no average packet dropping constraint but only continuity constraint. Equivalently, $\theta_{\rm lim}$ is the solution of (\ref{eqn:objective}) without applying $\mathcal{C}_2$ in (\ref{eqn:optimization}).

\bibliographystyle{IEEEtran}
\bibliography{bibliography}

\end{document}